\documentclass[sort
]{ceurart}

\usepackage{listings}
\usepackage{amssymb}
\usepackage{amsmath}
\usepackage{amsthm}
\usepackage{booktabs}
\usepackage{subcaption}
\usepackage{algorithm}
\usepackage{algpseudocode}

\begin{document}

%%
%% Rights management information.
%% CC-BY is default license.
\copyrightyear{2022}
\copyrightclause{Copyright for this paper by its authors.
  Use permitted under Creative Commons License Attribution 4.0
  International (CC BY 4.0).}

%%
%% This command is for the conference information
\conference{SEBD 2026}

%%
%% The "title" command
\title{Project resilience as network robustness}

\tnotemark[1]
\tnotetext[1]{This paper is a discussion paper describing the research originally presented in \cite{piccolo2018design,piccolo2024evaluating,piccolo2026busfactor}.}

\newtheorem{theorem}{Theorem}
\newtheorem{proposition}{Proposition}
\newtheorem{corollary}{Corollary}
\newtheorem{lemma}{Lemma}[theorem]
\newtheorem{definition}{Definition}

%%
%% The "author" command and its associated commands are used to define
%% the authors and their affiliations.
\author[1]{Sebastiano A. Piccolo}[%
orcid=0000-0002-6986-3344,
email=sebastiano.piccolo@unical.it,
]
\cormark[1]
% \fnmark[1]
\address[1]{University of Calabria, Department of Mathematics and Computer Science (DeMaCS)}

\author[1]{Giorgio Terracina}[%
orcid=0000-0002-3090-7223,
email=giorgio.terracina@unica.it,
]
% \address[2]{DeMaCS}
% \address[3]{Vrije Universiteit Amsterdam, De Boelelaan 1105, 1081 HV Amsterdam, The Netherlands}

%% Footnotes
\cortext[1]{Corresponding author.}
% \fntext[1]{These authors contributed equally.}

%%
%% The abstract is a short summary of the work to be presented in the
%% article.
\begin{abstract}
  Engineering projects are the result of the combined effort of their members.
  Yet, it has been documented that labor division withing projects is unevenly distributed: some project members are specialists undertaking only few tasks, whereas other are generalists and are responsible for the success of many tasks. Moreover, the latter are often facilitators of project integration.
  Such a workload distribution prompts one question: \emph{how resilient is a project to key personnel loss?}
  Far from being a theoretical problem, the reliance of a project on a few key people can lead to severe economic losses and delays.
  We argue that current methods to estimate such a risk are unsatisfactory: some methods offer a best-case estimate and are, therefore, too optimistic; other methods fail to capture project fragmentation leading to biased estimates and unrealistic consequences in many settings.
  In this paper, we develop a novel method to assess project vulnerability by looking at it from the lens of network robustness.
  We compare our method against existing alternatives and show that it offers better and more consistent estimates of project resilience to personnel loss.
\end{abstract}

%%
%% Keywords. The author(s) should pick words that accurately describe
%% the work being presented. Separate the keywords with commas.
\begin{keywords}
  Network robustness \sep
  Bipartite graphs \sep
  Project resilience \sep
  Bus Factor
\end{keywords}

%%
%% This command processes the author and affiliation and title
%% information and builds the first part of the formatted document.
\maketitle

\section{Introduction}
Engineering projects are a quintessential example of collective problem solving, combining the efforts of professionals with diverse skill-sets.
Due to the nature of the project itself, the diversity of technical skills involved, and the heterogeneity of requirements, the labor distribution observed in projects is highly skewed: a relatively small number of professionals undertake a large number of tasks, while the vast majority are responsible for a limited number of tasks~\cite{yamashita2015pareto,agrawal2018hero,piccolo2018design,majumder2019software}. 
These types of professionals have respectively been termed \emph{generalists} and \emph{specialists}~\cite{piccolo2018design, piccolo2019iterations}.

Such a workload distribution may render a project vulnerable to the loss of key personnel, with heavy consequences including delays, economic losses, and service or infrastructure disruptions that propagate beyond the project itself~\cite{zanetti2013rise,piccolo2018design,russo2024shock}. 
This vulnerability has been named \emph{Bus Factor} (or Truck Factor), informally defined as the minimum number of professionals whose departure from a project would stall it~\cite{zazworka2010developers,cosentino2015assessing,avelino2016novel}.

There is ample evidence of the Bus Factor and its severe consequences. 
The \texttt{left-pad} incident is emblematic: a developer unpublished a small package (\texttt{left-pad}) of just 11 lines of code, breaking the dependencies of a substantial part of the JavaScript ecosystem (including Babel, Webpack, and React)\footnote{More details on the \texttt{left-pad} incident at: \url{https://en.wikipedia.org/wiki/Npm_left-pad_incident}}. 
Additionally, studies on open source communities have shown that when a central contributor leaves, community cohesion, performance, and bug handling suffer dramatically until others step up~\cite{zanetti2013rise,russo2024shock}.
Furthermore, it has been found that core developers (termed \emph{heroes}) are often more productive and introduce fewer bugs than others~\cite{agrawal2018hero,majumder2019software}.

In light of the serious damage a project with a low Bus Factor can incur, it is of paramount importance to have reliable methods to evaluate robustness~\cite{zazworka2010developers, avelino2016novel, piccolo2018design, piccolo2024evaluating}. 
A good measure of Bus Factor not only estimates vulnerability but can be used to drive intervention strategies and optimize the allocation of people to tasks~\cite{piccolo2024evaluating}.

Unfortunately, as we show, the available methods to estimate Bus Factor suffer from severe shortcomings: \emph{1)} they are often not generally applicable, being tailored to specific metadata (e.g., GitHub); \emph{2)} they depend on arbitrary thresholds to define stalling conditions; \emph{3)} they fail to capture the resulting project fragmentation when key personnel leave; and \emph{4)} they imply consequences that contrast with empirical findings from project management (for instance, overestimating the impact of experts while downplaying the role of generalists).

\begin{figure*}
    \centering
    \includegraphics[width=\linewidth]{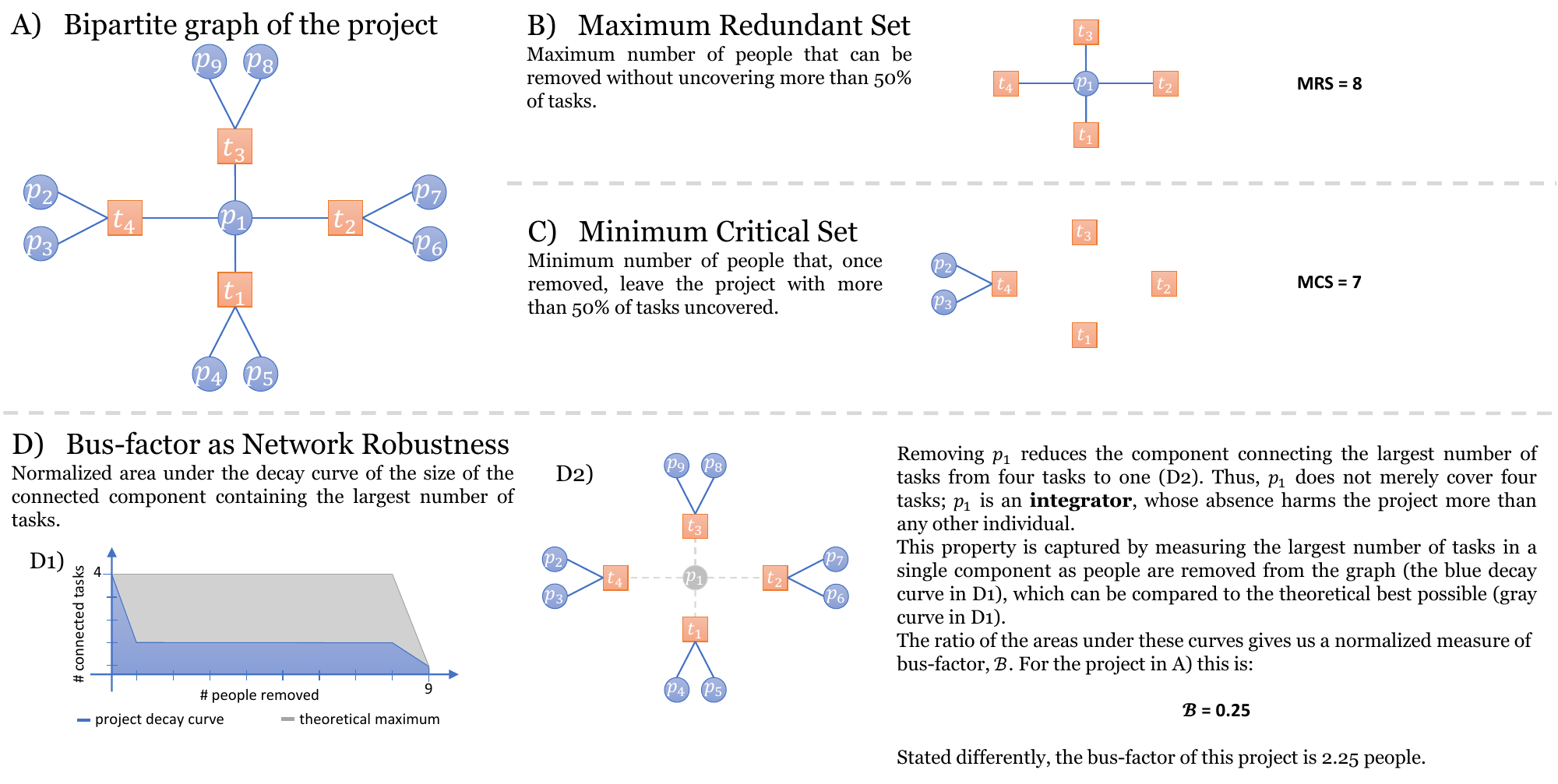}
    \caption{Comparison of the Bus Factor measures formalized as combinatorial problems on bipartite graphs.}
    \label{fig:bus_factor_framework}
\end{figure*}

In this paper, we present a novel measure of Bus Factor, that is domain-independent, threshold-free, and able to account for project fragmentation. 
We model a project as a bipartite graph connecting professionals to their tasks, formulating the Bus Factor as a combinatorial problem.

This modeling approach allows us to generalize existing Bus Factor estimation procedures. 
We formulate them as two complementary combinatorial problems: the Maximum Redundant Set (\textsc{MRS}), seeking the maximum set of people that can be safely removed; and the Minimum Critical Set (\textsc{MCS}), seeking the minimum set of people whose removal uncovers a target number of tasks. 
Consequently, we formulate our new measure as a third combinatorial problem, \textsc{Robustness}, which seeks the removal order of people that minimizes the area under the curve describing the size of the largest connected task component. 
As a result, we make all Bus Factor measures comparable by fixing their input graph. Figure~\ref{fig:bus_factor_framework} provides an intuitive comparison of the Bus Factor measures.

% We model a project as a bipartite graph between professionals and the task they work on and formulate the Bus Factor in a domain-agnostic way, as a combinatorial problem on such a graph.
% This modeling approach allows us to detach existing Bus Factor estimation procedures from their domains, formulating them as two complementary combinatorial problems: the Maximum Redundant Set, \textsc{MRS}, that seeks the maximum set of people that can be safely removed from the graph without uncovering more than a given number of tasks; and the Minimum Critical Set, \textsc{MCS}, that seeks the minimum set of people whose removal uncovers at least a given number of tasks.
% Consequently, we formulate our new Bus Factor measure as third combinatorial problem on this bipartite graph, \textsc{Robustness}, that seeks the removal order of people that minimizes the area under the curve that describes the largest number of tasks connected into a single component.
% As a result, we make all Bus Factor measures comparable, by fixing their input graph (Figure~\ref{fig:bus_factor_framework} offers a visual comparison of the three measures, along with an understanding of the shortcomings of existing approaches).

% Beyond introducing a unified theoretical framework for the computation of the Bus Factor and a new measure grounded in network robustness, we establish the hardness of all Bus Factor measures, propose efficient approximation algorithms, and evaluate the performance of the Bus Factor measures by building a systematic sensitivity analysis guided by project management theory.

\paragraph{Summary of contributions.} We make the following contributions:
\begin{enumerate}
    \item We model projects as bipartite graphs and derive a unifying framework that formulates Bus Factor measures as combinatorial problems of node removal.
    \item We formulate a novel measure of Bus Factor that overcomes the shortcomings of existing formulations.
    \item We prove that all Bus Factor formulations are NP-Hard and develop efficient approximation algorithms for each.
    \item We provide an extensive analysis and comparison of the Bus Factor measures through a series of tests guided by project management theory.
    \item We apply our measure to a real-world project to guide a re-allocation of people to tasks, improving its Bus Factor by 40\%.
\end{enumerate}

Together, our contributions provide both theoretical foundations and scalable algorithms for analyzing and mitigating project vulnerability.

\section{Related Work}
\paragraph{The impact of low Bus Factor.}
Ample evidence suggests that projects with low Bus Factor face severe risks~\cite{zazworka2010developers,cosentino2015assessing,avelino2016novel, ferreira2020turnover,foucault2015impact,avelino2019abandonment}.
Empirical studies consistently show that core developers (termed \emph{heroes}) are more productive and introduce fewer bugs~\cite{agrawal2018hero,majumder2019software}.
Crucially, the turnover of these key personnel is followed by a reduction in code quality and slower bug fixing~\cite{zanetti2013rise,foucault2015impact,avelino2019abandonment,ferreira2020turnover,russo2024shock}.
In extreme cases, projects fail to survive the loss entirely~\cite{avelino2016novel}.

\paragraph{Prior Bus Factor assessment methods.}
Bus Factor estimation has been researched most prominently in software engineering~\cite{zazworka2010developers, yamashita2015pareto, cosentino2015assessing, avelino2016novel, jabrayildaze2022bus}.
Despite its intuitive definition, prior work differs substantially in how projects are modeled and how stalling is defined.
Zazworka \textit{et al.}~\cite{zazworka2010developers} introduced a family of coverage-based measures, defining a project as safe if a given percentage of files (e.g., 50\%) remains covered after contributor removal.
While conceptually appealing, this approach does not explicitly model developer knowledge~\cite{ricca2011difficulty, ferreira2017comparison}.

Subsequent work refined knowledge inference.
Cosentino \textit{et al.}~\cite{cosentino2015assessing} introduced primary and secondary developers based on contribution thresholds.
Most notably, Avelino \textit{et al.}~\cite{avelino2016novel} proposed a greedy heuristic that iteratively removes the most knowledgeable developer (measured via Degree of Authorship) until more than 50\% of files become abandoned.
This heuristic has become the standard in follow-up studies~\cite{jabrayildaze2022bus, ferreira2017comparison}.

\paragraph{Critical Analysis.}
Current approaches mix two distinct concerns: \emph{1)} inferring developer knowledge from metadata (the \emph{filtering} stage); and \emph{2)} identifying the set of core developers (the \emph{estimation} stage).
By conflating these stages, prior methods rely on domain-specific heuristics rather than formal combinatorial optimization.
This realization drives our proposal: a domain-agnostic framework that decouples estimation from filtering, allowing for a rigorous graph-theoretic treatment of the Bus Factor.

\section{A domain-agnostic framework}
We model a project as a bipartite graph $B = (P, T, E)$, where $P$ is the set of professionals (people), $T$ is the set of tasks (or artifacts), and an edge $(p, t) \in E$ exists if professional $p$ contributes to task $t$.
This set $T$ is semantic-agnostic: it can represent source files, functions, or Jira tickets, provided that all elements in $T$ share the same granularity.

\paragraph{Notation and preliminaries.}
For a graph $G=(V, E)$, the \emph{neighborhood} of a vertex $v \in V$, denoted $N(v)$, is the set of vertices adjacent to $v$, and the degree of $v$ is $\deg(v) = |N(v)|$.
In our bipartite context, for a professional $p \in P$, $N(p) \subseteq T$ is the set of tasks they work on. 
Conversely, for a task $t \in T$, $N(t) \subseteq P$ is the set of professionals capable of performing it.
We extend this notation to sets: for a subset of professionals $S \subseteq P$, the \emph{coverage} of $S$, denoted $\operatorname{Cov}(S)$, is the set of all tasks performed by at least one person in $S$:
 $\operatorname{Cov}(S) = \bigcup_{p \in S} N(p) $.

A task $t$ is considered \emph{active} (or covered) with respect to a team $S$ if $t \in \operatorname{Cov(S)}$; otherwise, it is \emph{abandoned}.
A \emph{subgraph} induced by $S \subseteq V$ is denoted $G[S]$. A \emph{connected component} is a maximal connected subgraph.
A \emph{vertex cover} is a subset $V' \subseteq V$ such that every edge of $G$ is incident to at least one vertex in $V'$.

To analyze the complexity of Bus Factor measures, we recall the following classical \textit{NP-hard} decision problems~\cite{garey1979computers}:
\begin{description}
    \item \textsc{Set Cover:} Given a universe $U$, a family $\mathcal{F}$ of subsets of $U$, and an integer $k$, does there exist a subfamily $\mathcal{C} \subseteq \mathcal{F}$ with $|\mathcal{C}| \le k$ such that $\bigcup_{S \in \mathcal{C}} S = U$?
    \item \textsc{Clique:} Given a graph $G$ and integer $k$, does $G$ contain a fully connected subgraph of size at least $k$?
    \item \textsc{Vertex Cover:} Given a graph $G$ and integer $k$, does $G$ contain a vertex cover of size at most $k$? 
\end{description}

\paragraph{Unifying Existing Measures.}
We now formalize existing Bus Factor measures as combinatorial problems on the bipartite graph $B$.
Zazworka \textit{et al.}~\cite{zazworka2010developers} approached the problem by assessing \emph{redundancy}---asking: what is the largest set of people we can remove while keeping at least a fraction $\delta$ of tasks covered?
This leads to two distinct formulations based on whether we remove the least critical people (Best-Case) or the most critical people (Worst-Case):
\begin{align}
    Z_{best}(B, \delta) &= \max \{ k \mid \exists S \subseteq P, |S|=k : |\operatorname{Cov}(P \setminus S)| \ge \delta |T| \} \label{eq:zazworka_best} \\
    Z_{worst}(B, \delta) &= \max \{ k \mid \forall S \subseteq P, |S|=k : |\operatorname{Cov}(P \setminus S)| \ge \delta |T| \} \label{eq:zazworka_worst}
\end{align}
$Z_{best}$ essentially measures the maximum redundant workforce, while $Z_{worst}$ measures the project's resilience against the random loss of any $k$ employees.

In contrast, Cosentino~\emph{et al.}\cite{cosentino2015assessing} and Avelino~\emph{et al.}\cite{avelino2016novel} emphasized \emph{criticality}, defining the Bus Factor as the minimum number of people whose removal causes the project to stall (coverage drops below $\delta$).
We formalize this as:
\begin{equation}
    A(B, \delta) = \min \{ |S| : S \subseteq P \land |\operatorname{Cov}(P \setminus S)| < \delta |T| \}
    \label{eq:avelino}
\end{equation}

This formalization reveals a direct link between these seemingly distinct metrics. Specifically, it is straightforward to verify that $Z_{worst}(B, \delta) = A(B, \delta) - 1$.
That is, the Bus Factor ($A$) is simply the threshold at which the project's worst-case resilience ($Z_{worst}$) breaks.
This insight allows us to discard the specific ad-hoc heuristics used in prior work and subsume all definitions under two complementary combinatorial problems: \textsc{MRS} (Maximum Redundant Set) and \textsc{MCS} (Minimum Critical Set).
We define the problems below and provide proof sketches for their NP-hardness, referring the reader to~\cite{piccolo2026busfactor} for full formal proofs. Finally, we describe an effective approximation algorithm for each of them.

\begin{definition}[Maximum Redundant Set]
    Given a bipartite graph $B = (P, T, E)$ and $0 < \delta \le 1$, \textsc{MRS} seeks the largest set $S \subseteq P$ such that $|\operatorname{Cov}(P \setminus S)| \ge \delta |T|$.
\end{definition}

\begin{theorem}
    \textsc{MRS} is NP-hard.
\end{theorem}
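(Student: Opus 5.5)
We reduce from \textsc{Set Cover}. The key observation is that keeping a team $P \setminus S$ is equivalent to choosing which people to retain. So \textsc{MRS} with $\delta = 1$ asks for the largest removable set $S$, which is the same as asking for the smallest retained set $P \setminus S$ that still covers all of $T$. That is exactly a minimum set cover.

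Given a \textsc{Set Cover} instance $(U, \mathcal{F}, k)$, we build the bipartite graph $B = (P, T, E)$ as follows:
\begin{itemize}
    \item take $T = U$;
    \item create one professional $p_F$ for each $F \in \mathcal{F}$;
    \item add the edge $(p_F, u)$ if and only if $u \in F$.
\end{itemize}
We set $\delta = 1$, which is allowed since $0 < \delta \le 1$. This construction is clearly polynomial. We may assume $\bigcup \mathcal{F} = U$; otherwise the instance is trivially a no-instance, and this can be checked in polynomial time. By construction, $\operatorname{Cov}(\{p_F : F \in \mathcal{C}\}) = \bigcup_{F \in \mathcal{C}} F$ for every subfamily $\mathcal{C} \subseteq \mathcal{F}$.

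We then prove the equivalence: $\mathcal{F}$ has a cover of size at most $k$ if and only if \textsc{MRS} on $(B,1)$ has an optimum of size at least $|\mathcal{F}| - k$.

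For the forward direction, let $\mathcal{C}$ be a cover with $|\mathcal{C}| \le k$. Set $S = \{p_F : F \notin \mathcal{C}\}$. Then $|\operatorname{Cov}(P \setminus S)| = |U| = \delta |T|$, so $S$ is feasible and $|S| \ge |\mathcal{F}| - k$.

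For the reverse direction, let $S$ be feasible with $|S| \ge |\mathcal{F}| - k$. The condition $|\operatorname{Cov}(P\setminus S)| \ge |T|$ forces full coverage. Hence the sets corresponding to $P \setminus S$ form a cover of size at most $k$.

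Consequently, a polynomial algorithm for (the decision version of) \textsc{MRS} would decide \textsc{Set Cover}, so \textsc{MRS} is NP-hard. Equivalently, the optimum of \textsc{MRS} equals $|\mathcal{F}|$ minus the minimum cover size.

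The only subtle point is the choice of $\delta$. The statement is for general $\delta$, and hardness for the single value $\delta=1$ already suffices for NP-hardness of the problem as defined. If one wants hardness for every fixed $\delta<1$, I would pad the instance. The plan is to add dummy tasks, each covered by a fresh dedicated professional who covers only that task, so that covering a $\delta$ fraction forces covering all original tasks. I expect getting the padding counts right (via rounding of $\delta|T|$) to be the main bookkeeping obstacle, not a conceptual one.
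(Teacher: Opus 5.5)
Your proof is correct and matches the paper's argument: the paper also fixes $\delta = 1$ and observes that maximizing the removed set $S$ is the same as minimizing the retained set $P \setminus S$ that covers all of $T$, i.e.\ \textsc{Set Cover}; your incidence construction and the two-direction check simply make that sketch explicit. One caveat on your optional $\delta<1$ remark: dedicated dummy professionals do not literally \emph{force} every original task to be covered, since a solution may leave elements uncovered and keep dummies instead; however, each uncovered element can be picked up by one extra set at the same unit cost as a dummy, so the optimum is unchanged, and padding with $m=\lfloor n(1-\delta)/\delta\rfloor$ dummies (where $n=|U|$) does work.
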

\begin{proof}[Proof idea]
    This is equivalent to \textsc{Set Cover}. For $\delta = 1$, finding the largest set $S$ to remove is equivalent to finding the smallest set $X = P \setminus S$ to keep such that $X$ covers all tasks in $T$.
\end{proof}

\noindent\textbf{Approximation.} Since \textsc{MRS} is equivalent to \textsc{Partial Set Cover}, we employ the classic greedy strategy: iteratively select the professional $p \in P$ who covers the largest number of currently uncovered tasks, until the number of covered tasks is at least $\delta |T|$. The complement of this set is our approximate \textsc{MRS}.

\begin{definition}[Minimum Critical Set]
    Given a bipartite graph $B = (P, T, E)$ and $0 < \delta \le 1$, \textsc{MCS} seeks the smallest set $S \subseteq P$ such that $|\operatorname{Cov}(P \setminus S)| < \delta |T|$.
\end{definition}

\begin{theorem}
    \textsc{MCS} is NP-hard.
\end{theorem}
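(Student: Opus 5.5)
The plan is to reduce from \textsc{Clique}, or equivalently from \textsc{Vertex Cover}, via the complement of a maximum partial vertex cover. The decision version of \textsc{MCS} asks: given $B$, $\delta$ and $k$, is there $S \subseteq P$ with $|S| \le k$ such that removing $S$ abandons more than $(1-\delta)|T|$ tasks? A task $t$ is abandoned exactly when $N(t) \subseteq S$.

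\textbf{Construction.} Given a graph $G=(V,E_G)$ and an integer $k$, I build the incidence bipartite graph. Set $P = V$ and $T = E_G$, and join $p$ to $t$ whenever vertex $p$ is an endpoint of edge $t$. Every task then has exactly two professionals. Removing a set $S$ abandons precisely the edges with both endpoints in $S$, so the number of abandoned tasks is $|E_G[S]|$. Hence \textsc{MCS} on this instance asks for the smallest $S$ whose induced subgraph has at least some prescribed number $m$ of edges.

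Choosing $m = \binom{k}{2}$ gives the key equivalence: $G$ has a $k$-clique if and only if some $S$ with $|S| \le k$ induces at least $\binom{k}{2}$ edges. The backward direction holds because a set of at most $k$ vertices carrying $\binom{k}{2}$ edges must have exactly $k$ vertices and be complete. So $G$ has a $k$-clique if and only if the optimal \textsc{MCS} value is at most $k$.

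\textbf{Encoding $\delta$.} The remaining step is to choose $\delta$ so that the strict condition $|\operatorname{Cov}(P\setminus S)| < \delta|T|$ becomes exactly "at least $\binom{k}{2}$ tasks abandoned". I need $|E_G| - a < \delta|E_G|$ to hold iff $a \ge \binom{k}{2}$. Taking
\[
\delta = \frac{|E_G| - \binom{k}{2} + 1}{|E_G|}
\]
works, because coverage is an integer. This $\delta$ lies in $(0,1]$ whenever $\binom{k}{2} \ge 1$. Trivial cases, such as $k \le 1$ or $\binom{k}{2} > |E_G|$, I would dispatch directly.

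\textbf{Main obstacle.} I expect the delicate part to be this threshold bookkeeping: the strict inequality, integrality, and keeping $\delta$ in range. An alternative, if needed, is to pad $T$ with extra tasks covered by a dedicated pool of professionals, which can tune the ratio; this must be done without creating cheaper critical sets. Since the construction is polynomial and \textsc{Clique} is NP-hard, \textsc{MCS} is NP-hard.
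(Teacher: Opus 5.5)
Your proposal is correct and follows essentially the same route as the paper: a reduction from \textsc{Clique} via the vertex--edge incidence bipartite graph, where a task is abandoned exactly when both endpoints are removed. Your choice $\delta|T| = |E_G| - \binom{k}{2} + 1$ is precisely the paper's threshold $|T| - \bigl(\binom{k}{2}-1\bigr)$, and your explicit handling of integrality, the range of $\delta$, and the "at most $k$ vertices with $\binom{k}{2}$ edges forces a $k$-clique" step fills in details the paper's sketch leaves implicit.
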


\begin{proof}[Proof idea]
    This can be proven via reduction from \textsc{Clique}. Given a graph $G=(V, E')$, we construct a bipartite graph $B=(P,T,E)$ where $P=V$, $T=E'$, and edges represent incidence.
    A task $t \in T$ (an edge in $G$) is \emph{abandoned} if and only if both its endpoints are removed.
    Thus, finding a set $S$ of size $k$ that causes the abandonment of $\binom{k}{2}$ tasks is equivalent to finding a clique of size $k$ in $G$.
    By setting $\delta$ such that the allowed remaining tasks are fewer than $|T| - \bigl(\binom{k}{2} - 1 \bigr)$, \textsc{MCS} solves \textsc{Clique}.
\end{proof}

\noindent\textbf{Approximation.} For \textsc{MCS}, we employ a greedy removal strategy based on node degree. In each step, we remove the professional $p \in P$ with the highest degree (number of active tasks).
We repeat this until the number of remaining covered tasks falls below $\delta |T|$. While simple, this heuristic aligns with the intuition that high-degree nodes (generalists) are critical failure points.

% \begin{definition}[Maximum Redundant Set]
%     Given a bipartite graph $B = (P, T, E)$ and $0 < \delta \le 1$, \textsc{MRS} seeks the largest set $S \subseteq P$ such that $\operatorname{Cov}(P \setminus S) \ge \delta |T|$.
% \end{definition}

% \begin{theorem}
%     \textsc{MRS} is NP-hard.
% \end{theorem}

% \begin{proof}[Proof idea]
%     Observe that for $\delta = 1$ \textsc{MRS} has to select the largest $S \subseteq P$ such that the complementary set $X = P \setminus S$ covers all the tasks in $T$. That is, $X$ is a minimum set cover for $B$.
% \end{proof}

% \begin{definition}[Minimum Critical Set]
%     Given a bipartite graph $B = (P, T, E)$ and $0 < \delta \le 1$, \textsc{MCS} seeks the smallest set $S \subseteq P$ such that $\operatorname{Cov}(P \setminus S) < \delta |T|$.
% \end{definition}

% \begin{theorem}
%     \textsc{MCS} is NP-hard.
% \end{theorem}

% \begin{proof}[Proof idea]
%     Observe that \textsc{MCS} can be used to answer the \textsc{Clique} problem. Given an undirected graph $G = (V,E')$ and an integer $k$, $G$ can be transformed into a bipartite graph $B = (P, T, E)$ by setting $P = V$, $T = E'$ and edges between nodes in $P$ and nodes in $T$ are given by the incidences in $G$. Note that a clique of size $k$ in $G$, if it exist, contains $k(k-1)/2$ edges. Now, on the constructed bipartite graph $B$ we compute \textsc{MCS} setting $\delta$ such that $\delta |T| = (k(k-1)/2) - 1$; that is, $\delta = (k(k-1)-2)/2|T|$. Now, it is easy to verify that $G$ has a clique of size $k$ if and only if this instance of \textsc{MCS} returns a set $S$ of size $k$.
% \end{proof}

\subsection{Proposed Measure: Bus Factor as Robustness}
We now introduce our novel measure, designed to address the structural limitations of prior formulations.
Existing measures like MCS and MRS treat professionals solely as resources, ignoring the \emph{topology} of collaboration.
Specifically, they fail to capture \emph{fragmentation}: the scenario where a project doesn't just stop, but shatters into isolated, non-communicating silos. Furthermore, they rely on arbitrary thresholds (e.g., $\delta=0.5$) that may not reflect the reality of a specific project.

To overcome these limitations, we redefine Bus Factor based on \emph{network robustness}~\cite{schneider2011mitigation}.
We monitor the size of the \textit{Largest Connected Task Component} (LCTC) as professionals are removed.
Let $\tau(B)$ denote the size of the LCTC in a graph $B$.
We define the robustness of a project $B=(P,T,E)$ under a specific removal sequence $\pi \in \Pi(P)$ of the professionals in $P$.
Let $B^{\pi}_{i}$ be the graph surviving after removing the first $i$ professionals in $\pi$.
We compute the area under the decay curve using the trapezoidal rule and normalize it against the theoretical maximum (a fully connected graph).

The normalized robustness for a removal sequence $\pi$ is:
\begin{equation}
    \mathcal{B}(B, \pi) =
    \frac{2}{|T|(2|P| - 1)}
    \sum_{i=1}^{|P|}
    \left[
        \tau\left(B^{\pi}_{i-1}\right) + \tau\left(B^{\pi}_{i}\right)
    \right]
\label{eq:normalised_bus_factor}
\end{equation}

This value $\mathcal{B} \in [0, 1]$ represents the percentage of the collaborative potential preserved during the removal sequence.
To capture the worst-case scenario, we define the project's Bus Factor as the robustness under the most destructive removal sequence: 
\begin{equation}
    \mathcal{B}_{BF}(B) = \min_{\pi \in \Pi(P)} \mathcal{B}(B, \pi)
\end{equation}

\paragraph{Computational Complexity.}
% Calculating the exact Robustness $\mathcal{B}_{BF}(B)$ requires finding the permutation $\pi$ that minimizes the area under the curve. We now show that this optimization problem is NP-hard by formulating its decision version.
In order to prove the hardness of the problem, we first formalize the structural shattering of a project using the concept of a \emph{Backbone Set}.

\begin{definition}[Backbone Set]
    Given a bipartite graph $B = (P, T, E)$, a set $S \subseteq P$ is a \emph{Backbone Set} if the subgraph induced by removing $S$ consists solely of disconnected stars, where each star is centered on a professional $p \in P \setminus S$.
\end{definition}

Intuitively, removing a backbone set destroys all collaborative paths between professionals, isolating every remaining person.
Note that for any removal sequence $\pi \in \Pi(P)$ the last removal leaves an empty graph, contributing zero to $\mathcal{B}(B, \pi)$.
Therefore, $\mathcal{B}(B, \pi)$ is fully determined by the first $|P|-1$ removals.
After these $|P|-1$ removals only one professional remains, and therefore the removed set forms a backbone set.
Consequently, computing $\mathcal{B}_{BF}(B)$ amounts to selecting a backbone set of $|P|-1$ professionals (and their order) that collapses the graph most efficiently.

We can thus formalize the robustness decision problem in terms of backbone sets:
\begin{definition}[Robustness Decision Problem]
    Given a bipartite graph $B = (P, T, E)$, an integer $k$, and a threshold $0 < M \le 1$, does there exist a backbone set of size $k$ with an associated removal sequence $\pi$ such that $\mathcal{B}(B, \pi) \le M$?
\end{definition}

\begin{theorem}
    The Robustness Decision Problem is NP-hard.
\end{theorem}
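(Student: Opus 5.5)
The natural route is a reduction from \textsc{Vertex Cover}, which is the third classical problem recalled above and so presumably the one the authors intend. Given a graph $G=(V,E')$ and an integer $k$, we build the bipartite incidence graph $B=(P,T,E)$ as in the proof of the MCS theorem: $P=V$, $T=E'$, and $p$ is adjacent to task $t$ whenever $p$ is an endpoint of $t$. Every task then has degree exactly two.

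The key observation is that, in $B$, a set $S\subseteq P$ is a backbone set if and only if $S$ is a vertex cover of $G$. Consider a task $t=\{u,v\}$.
\begin{itemize}
\item If both $u,v\notin S$, then $u$ and $v$ remain connected through $t$, so the component containing them is not a star centered on a single professional.
\item Conversely, if every task has at least one endpoint in $S$, then each surviving task is adjacent to at most one surviving professional.
\end{itemize}
In the second case, every remaining component is a single professional with some pendant tasks, or an isolated task. Isolated tasks are allowed, or can be handled by the convention that abandoned tasks are dropped. So deciding whether a backbone set of size $k$ exists is exactly \textsc{Vertex Cover}. The threshold must be set to neutralize the robustness constraint, so that any removal order suffices. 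Taking $M=1$ works, since $\mathcal{B}(B,\pi)\in[0,1]$ for every $\pi$ by the normalization; this bound should be checked quickly from $\tau\le |T|$. Then the instance $(B,k,M=1)$ is a yes-instance iff $G$ has a vertex cover of size exactly $k$. Since vertex covers are upward closed (for $k\le|V|$), ``exactly $k$'' is equivalent to ``at most $k$''. The construction is clearly polynomial.

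I expect the main obstacle to be interpretive: what ``backbone set with an associated removal sequence $\pi$'' means. I would read it as a sequence $\pi$ whose first $k$ elements form the backbone set. The value $\mathcal{B}(B,\pi)$ depends on the full permutation, but with $M=1$ this is immaterial. If one insists that $M<1$ for a nontrivial instance, I would first check that $\mathcal{B}(B,\pi)<1$ always holds: the final term $\tau(B^\pi_{|P|})=0$ makes the sum strictly smaller than $2|P|\,|T|$, though the exact normalization constant should be verified against the trapezoidal bound. One can then pick $M$ strictly between the maximum attainable value and $1$. A fallback would be to pad $G$ so that the robustness value is controlled, but I would only pursue that if the trivial-threshold argument were deemed unsatisfactory.
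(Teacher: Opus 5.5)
Your proposal is correct and is essentially the paper's proof: the same reduction from \textsc{Vertex Cover} through the incidence graph ($P=V$, $T=E'$), with $M=1$ making the threshold vacuous so that the question becomes whether a backbone set of size $k$ exists, which in $B$ is exactly a vertex cover of size $k$; your handling of isolated tasks and of ``exactly $k$'' versus ``at most $k$'' makes explicit what the sketch leaves implicit. The one thing to discard is your fallback claim that $\mathcal{B}(B,\pi)<1$ always holds, which is false: for $G$ a single edge every order gives $\tau$-values $1,1,0$, hence $\mathcal{B}=1$ under the intended normalization $1/(|T|(2|P|-1))$, and even $\mathcal{B}=2$ with the constant as displayed in the paper. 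So, exactly like the paper, your argument rests on the stated range $\mathcal{B}\in[0,1]$ and should use only $M=1$.
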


\begin{proof}[Proof Sketch]
    We reduce from \textsc{Vertex Cover}. Given an undirected graph $G = (V, E')$, we construct the bipartite incidence graph $B = (P, T, E)$ where $P = V$ and $T = E'$ (tasks represent edges).
    We prove hardness even for the special case where the threshold is vacuous (e.g., $M = 1$). In this case the decision problem reduces to determining whether a backbone set of size $k$ exists.

    $(\Rightarrow)$ If $G$ has a vertex cover $S$ of size $k \le |V|$, since $S$ covers all edges in $G$, removing $S$ in $B$ disconnects every task from at least one endpoint. The graph $B[P \setminus S]$ thus becomes a collection of disconnected stars (and therefore $S$ is a backbone set).

    $(\Leftarrow)$ Conversely, if no vertex cover of size $k$ exists in $G$, then for any set of $k$ professionals in $B$, there remains at least one edge in $G$ that is not covered. In $B$, this corresponds to a task $t$ that still connects two professionals. Thus, no backbone set of size $k$ exists in $B$.
    As such, solving the Robustness Decision Problem decides \textsc{Vertex Cover}.
\end{proof}

\paragraph{Approximation.}
Since the exact calculation is NP-hard, we approximate $\mathcal{B}_{BF}$ using a greedy strategy. We define the removal order $\pi$ by sorting professionals $p \in P$ in decreasing order of their degree.
To compute the area efficiently for large graphs, we employ a Union-Find data structure and simulate the removal process in reverse (starting from an empty graph and adding nodes back).
This allows us to track component sizes dynamically, ensuring the evaluation is nearly linear in the number of edges.
For implementation details, we refer the reader to our prior work~\cite{piccolo2024evaluating,piccolo2026busfactor}.

\section{Experimental Evaluation}
\label{sec:experiments}
While empirical evaluations on open-source repositories provide real-world context, they often lack the granularity required to rigorously assess the sensitivity of a metric.
In addition, many projects exhibit a trivial Bus Factor of 1 (as noted by Avelino et al.~\cite{avelino2016novel}), making them unsuitable for rigorous tests. 
To overcome this, and to evaluate $\mathcal{B}_{BF}$ under controlled conditions, we conduct a sensitivity analysis on synthetic bipartite graphs.

\paragraph{Experimental Setup.}
We generated a set of synthetic bipartite graphs exhibiting power-law degree distributions, consistent with the heavy-tailed nature of the workload distribution in real projects~\cite{yamashita2015pareto,agrawal2018hero,piccolo2018design,majumder2019software}.
Our baseline graphs consist of $|P|=7500$ professionals and $|T|=10000$ tasks.
We subject these graphs to controlled structural perturbations that mimic specific managerial actions (e.g., changing workload, hiring staff), and compare the behaviors of $\mathcal{B}_{BF}$, MCS, and MRS.
For MCS and MRS, we set the threshold $\delta = 0.5$, in continuity with prior research~\cite{avelino2016novel,zazworka2010developers}.
We investigate two research questions:
\begin{itemize}
    \item[] \textbf{RQ1:} How do metrics respond to changes in workload distribution?
    \item[] \textbf{RQ2:} How do metrics differentiate between adding specialists versus integrators?
\end{itemize}

\paragraph{Sensitivity to Workload Distribution (RQ1).}
We first investigate how the metrics respond to variations in task assignment density. We simulate this by adding and removing edges from the generated graphs.
Intuitively, a valid robustness measure should correlate positively with network density: more shared tasks imply higher redundancy.
We simulate this by adding and removing edges from the generated graphs in batches of 100, up to a total of 50000 modifications.
\begin{figure}[t]
    \centering
    \includegraphics[width=\linewidth]{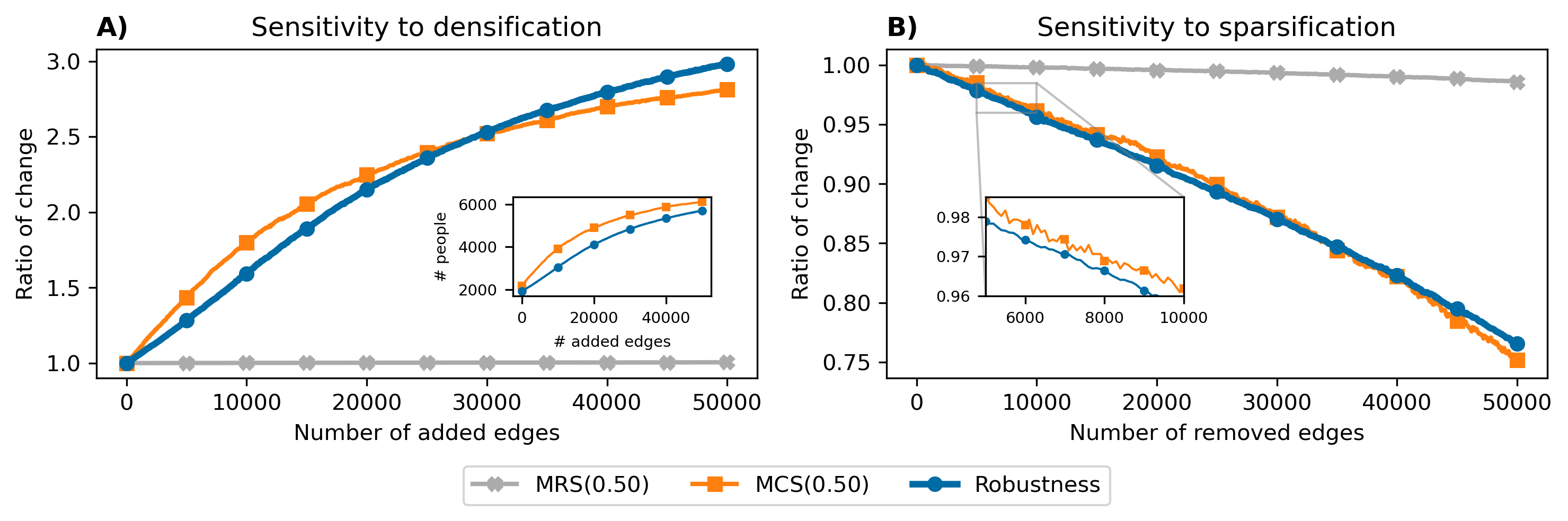}
    \caption{Sensitivity of Bus Factor measures to changes in network density (\textbf{RQ1}).
    \textbf{A)} Densification: MCS and $\mathcal{B}_{BF}$ increase with network density, as expected. Inset: values of MCS and $\mathcal{B}_{BF}$ expressed as number of people.
    \textbf{B)} Sparsification: MCS and $\mathcal{B}_{BF}$ decrease with network density, as expected. Inset: a zoomed-in view of the range $[5000, 10000]$ edges removed, showing that $\mathcal{B}_{BF}$ is more stable than MCS.}
    \label{fig:densification_sparsification}
\end{figure}
As shown in Figure~\ref{fig:densification_sparsification}, MRS is insensitive to changes in density. This is a consequence of its optimistic definition~\eqref{eq:zazworka_best}.
MCS and $\mathcal{B}_{BF}$, instead, exhibit the desired behavior, although MCS shows significant artifacts.
During densification (Fig.~\ref{fig:densification_sparsification}A), MCS saturates rapidly and plateaus, a direct consequence of its fixed coverage threshold.
Conversely, during sparsification (Fig.~\ref{fig:densification_sparsification}B), MCS exhibits oscillations as the removal of edges alters node degrees and shifts the critical set.
In contrast, $\mathcal{B}_{BF}$ scales smoothly and monotonically. 
Because $\mathcal{B}_{BF}$ measures connected components rather than simple coverage, it accurately reflects the gradual degradation or improvement of the project structural integrity without threshold-induced instability.

\paragraph{Sensitivity to Personnel Redundancy (RQ2).}
A robust metric must distinguish between headcount and structural resilience.
We simulate two distinct hiring strategies to test this distinction:
\begin{enumerate}
    \item \textit{Adding Singletons:} Adding specialists who work on a single task (one specialist per task).
    \item \textit{Adding Duplicates:} Cloning existing contributors (in decreasing order of degree) to increase backup potential.
\end{enumerate}

\begin{figure}[t]
    \centering
    \includegraphics[width=\linewidth]{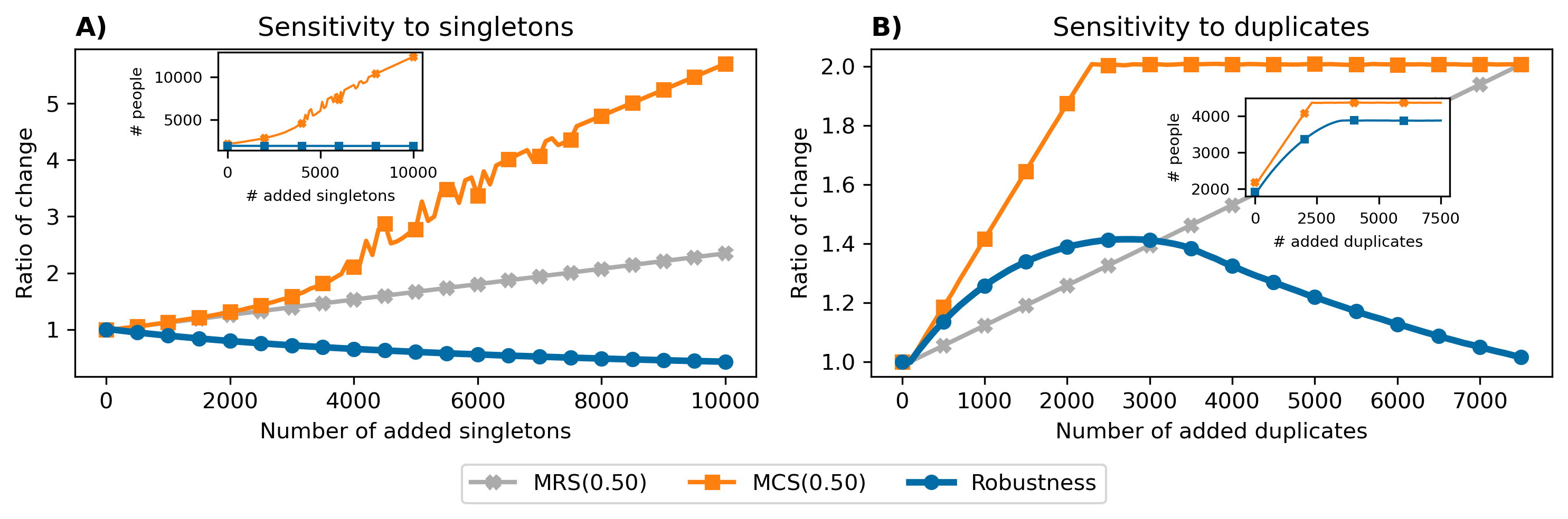}
    \caption{Sensitivity of Bus Factor measures to personnel redundancy (\textbf{Q2}).
    \textbf{A)} Sensitivity to singletons. \textbf{B)} Sensitivity to duplicates. In the inset: values of MCS and $\mathcal{B}_{BF}$ expressed as number of people.}
    \label{fig:singletons_duplicates}
\end{figure}

\noindent\textbf{The Singleton Problem.}
A critical flaw in coverage-based metrics is their inability to distinguish between resources and structure.
As shown in Figure~\ref{fig:singletons_duplicates}A, both MCS and MRS grow indefinitely as singletons are added. 
This implies that a project can be made infinitely robust simply by hiring isolated specialists, contradicting established project management theory~\cite{lawrence1986organization, heath2000coordination, piccolo2018design, piccolo2019iterations, majumder2019software}.
$\mathcal{B}_{BF}$ avoids this pitfall. It correctly identifies that singletons do not bridge structural gaps. 
In fact, $\mathcal{B}_{BF}$ decreases slightly, capturing the dilution of the project density and the diminishing returns of adding non-integrating personnel.

\noindent\textbf{Integrators vs. Specialists.}
When adding duplicates (Fig.~\ref{fig:singletons_duplicates}B), $\mathcal{B}_{BF}$ again demonstrates superior sensitivity. 
While MCS eventually saturates, $\mathcal{B}_{BF}$ increases rapidly when high-degree integrators are duplicated and plateaus as low degree people are added.
This confirms that our topological approach successfully distinguishes between structure and headcount, rewarding the creation of backups for central connectors while correctly identifying the diminishing returns of backing up peripheral people.

\section{Case Study: Structural Optimization in Practice}
\label{sec:casestudy}
To demonstrate the practical utility of $\mathcal{B}_{BF}$ beyond mere measurement, we apply it to a real-world project~\cite{piccolo2018design,piccolo2024evaluating}. 
Our goal is not only to evaluate the project's current structural robustness but to actively prescribe a more resilient task assignment matrix.

\paragraph{Baseline Evaluation and Permutation Test.}
We first evaluate the original personnel-task bipartite network, yielding a baseline robustness of 0.252. 
To understand if this value is an artifact of the specific degree distribution of the project, we generated a statistical ensemble of 10000 networks using a degree-preserving null model to perform a permutation test. (Figure~\ref{fig:optimized_bus_factor}B). 
Surprisingly, the original Bus Factor of the project is statistically \emph{lower} than what would be expected from a purely random assignment ($p = 0.007$). 
This indicates that the project suffers from severe silos that artificially inflate its vulnerability.

\paragraph{Optimization via Simulated Annealing.}
To correct this structural vulnerability, we treat the task assignment as an optimization problem. 
We employ a Simulated Annealing (SA) to iteratively rewire the bipartite graph in order to maximize $\mathcal{B}_{BF}$, while preserving the workload of every person. 
The computational feasibility is guaranteed by our linear-time Union-Find-based algorithm.

\begin{figure}
    \centering
    \includegraphics[width=\linewidth]{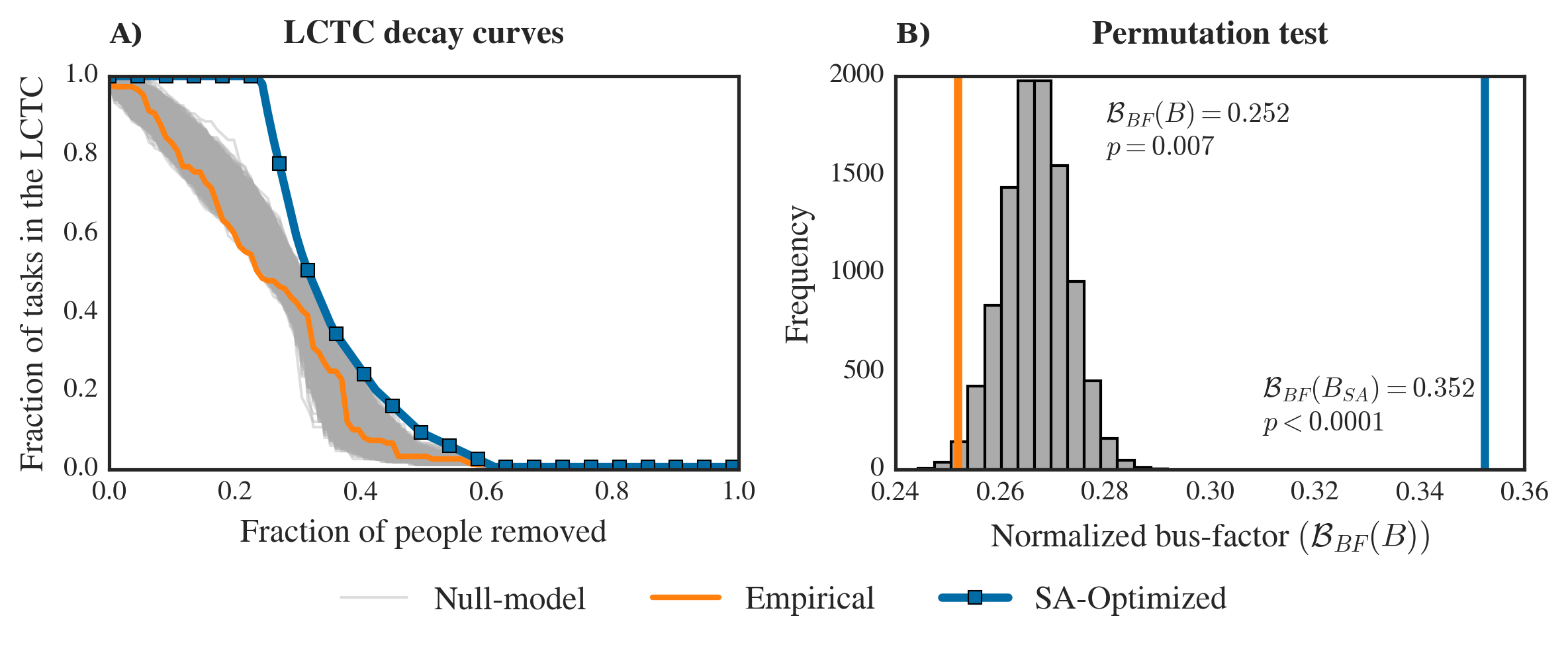}
    \caption{Comparison of the Bus Factor of a real-world project and its version optimized for robustness, against that of a statistical null model. \textbf{A}: decay curves. \textbf{B}: permutation test.}
    \label{fig:optimized_bus_factor}
\end{figure}

As shown in Figure~\ref{fig:optimized_bus_factor}A, the SA optimization shifts the decay curve drastically to the right. 
The optimized network, $B_{SA}$, achieves a robustness of $0.352$, a \textit{40\% improvement} over the original configuration.
The permutation test shows that this improvement is statistically significant (Figure~\ref{fig:optimized_bus_factor}B).

The magnitude of this optimization is clear in the decay curves. 
In the original network, removing the most connected 20\% of people reduces the size of the LCTC to 60\% of its original size.
In contrast, in the SA-optimized network, removing the exact same number of people barely disconnects the \emph{first} task. 
By merely reassigning existing workloads (without hiring new personnel), the project can be structurally hardened against turnover.

\section{Conclusion and Future Work}
\label{sec:conclusion}

In this paper, we challenged the prevailing resource-centric formulations of the Bus Factor. 
Traditional coverage-based measures, such as the Minimum Critical Set and Maximum Redundant Set, are structurally blind: they rely on arbitrary thresholds, fail to capture project fragmentation, and succumb to the \emph{Singleton Problem}.

To overcome these limitations, we introduced $\mathcal{B}_{BF}$, a novel metric grounded in bipartite network robustness. 
By monitoring the decay of the Largest Connected Task Component as people are removed, $\mathcal{B}_{BF}$ provides a normalized, continuous, topology-aware measure of collaborative resilience. 
While finding the optimal removal sequence is NP-hard, $\mathcal{B}_{BF}$ can be efficiently approximated in linear time.

Beyond accurate measurement, $\mathcal{B}_{BF}$ is highly actionable. 
As demonstrated in our case study, by utilizing $\mathcal{B}_{BF}$ as a fitness function within a Simulated Annealing framework, we successfully optimized the bus factor of a real-world project, improving its structural robustness by 40\% without adding new people or increasing their workload.

\paragraph{Future Work.}
This study paves the way for several promising research directions. 
First, we plan to extend our framework by allowing it to consider edge and task weights. 
Second, incorporating temporal network dynamics will enable us to track how project robustness evolves over its lifecycle, potentially serving as an early-warning system for structural decay before the departure of key personnel.

\begin{acknowledgments}
Sebastiano A. Piccolo is grateful to Marco Manna, Simona Perri, and Aldo Ricioppo for helpful comments. 
This research is partially supported by MUR under PNRR project PE0000013-FAIR, Spoke 9 - Green-aware AI -- WP9.2 and PN RIC project ASVIN ``Assistente Virtuale Intelligente di Negozio'' (CUP B29J24000200005).
\end{acknowledgments}

%% The declaration on generative AI comes in effect
%% in Janary 2025. See also
%% https://ceur-ws.org/GenAI/Policy.html
\section*{Declaration on Generative AI}
During the preparation of this work, the authors used Gemini in order to: Grammar and spelling check. 
After using this tool, the authors reviewed and edited the content as needed and take full responsibility for the content of the publication.

%%
%% Define the bibliography file to be used
\bibliography{bibliography}

%%
%% If your work has an appendix, this is the place to put it.
% \appendix

% \section{Online Resources}

% The sources for the ceur-art style are available via
% \begin{itemize}
% \item \href{https://github.com/yamadharma/ceurart}{GitHub},
% % \item \href{https://www.overleaf.com/project/5e76702c4acae70001d3bc87}{Overleaf},
% \item
%   \href{https://www.overleaf.com/latex/templates/template-for-submissions-to-ceur-workshop-proceedings-ceur-ws-dot-org/pkfscdkgkhcq}{Overleaf
%     template}.
% \end{itemize}

\end{document}